
%
\documentclass[a4paper]{article}
%
%
\usepackage{xcolor}
\usepackage{authblk}
\usepackage{amsfonts}
\usepackage{amsthm,amssymb,latexsym,graphicx,amsmath,lineno}
\usepackage{algorithm}
\usepackage{algpseudocode}
\algrenewcommand\algorithmicrequire{\textbf{Input:}}
\algrenewcommand\algorithmicensure{\textbf{Output:}}

\newtheorem{theorem}{Theorem}

\newtheorem{lemma}[theorem]{Lemma}

\begin{document}
\title{Quantum algorithm for doubling the amplitude of the search problem's solution states.}
\author{Mauro Mezzini}
\affil{Roma Tre University, Italy}
\author{Fernando L. Pelayo}
\author{Fernando Cuartero}
\affil{University of Castilla - La Mancha, Spain}

\maketitle
\begin{abstract}
 In this paper we present a quantum algorithm which increases the amplitude of the states corresponding to the solutions of the search problem by a factor of almost two.
\end{abstract}

\section {Introduction and preliminaries}

We denote the set of natural number excluding the 0 element by $\mathbb N^+$.
If $x \in \mathbb N$, $0 \leq x < 2^n$ then we say that $|x \rangle$ is  a \emph{computational state}. We denote by $|x_{n-1}x_{n-2} \dots x_0 \rangle$ the binary representation of $|x \rangle$ where $x_{n-1}$ is the most significant bit of the binary representation of $x$.  Let also denote by $w(x) = \sum_{j=0}^{n-1} x_j$. If $z \in \mathbb N$, $0 \leq z < 2^n$ we denote by $z \cdot x$ the sum $z \cdot x= \sum_{j=0}^{n-1} x_j z_j$. Furthermore, for any $k \in \mathbb N$, $0<k<n$, we denote by $z_{\text{-k}}$ a natural number  obtained from $z$ by considering only the least $n-k$ significant bits, that is, if the binary representation of $z$ is $z_{n-1}z_{n-2}\dots  z_1z_0$ then the binary representation of $z_{\text{-k}}$ is $z_{n-k-1}z_{n-k-2}\dots z_0$. 
The $S$ gate for a single qubit is represented by the following matrix:
\begin{equation}
S = \left | \begin{array}{cc}
1 & 0\\
0 & i\\
\end{array}
\right|
\end{equation}

$\dots$\\
$\dots$

\section{The effect of applying Hadamard, S and Hadamard gates to $|0 \rangle^{\otimes^n}$}

In this section we want to determine the effect of applying the Hadamard, S and Hadamard gates on quantum state $|0\rangle ^{\otimes^n}$, that is, we want to determine a formula for 
\begin{equation} 
|\alpha \rangle= H^{\otimes^n}S^{\otimes^n}H^{\otimes^n}|0\rangle ^{\otimes^n}
\end{equation}
and we show that in the  final superposition $|\alpha \rangle=\sum_{z=0}^{2^n-1} a_z |z\rangle$ the amplitude $a_z$ of a sigle state $|z\rangle$ depends by $w(z)$.

It is known  \cite{10.5555/1972505} that given any computational state $|x\rangle$, $0\leq x < 2^n$
\begin{equation} \label{eq:hadamard}
|\psi \rangle =H^{\otimes^n}|x\rangle= \dfrac{1}{\sqrt{2^n}} \sum_{z=0}^{2^n-1} (-1)^{x \cdot z}|z\rangle 
\end{equation}
Now we start with the following Lemma
\begin{lemma} \label{lemma:application_S_port}
Let $0\leq x <2^n$
\[
S^{\otimes^n} |x \rangle = i^{w(x)}  |x_{n-1}x_{n-2} \dots x_0  \rangle
\]
\end{lemma}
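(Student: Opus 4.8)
The plan is to reduce the statement to a qubit-by-qubit computation exploiting the tensor-product structure. First I would write the computational state in its binary form, $|x\rangle = |x_{n-1}\rangle \otimes |x_{n-2}\rangle \otimes \cdots \otimes |x_0\rangle$, so that, by definition of the tensor product of operators, $S^{\otimes^n}|x\rangle = (S|x_{n-1}\rangle)\otimes (S|x_{n-2}\rangle)\otimes\cdots\otimes(S|x_0\rangle)$.

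Next I would record the action of $S$ on a single qubit in the computational basis: from the matrix defining $S$ we have $S|0\rangle=|0\rangle$ and $S|1\rangle=i|1\rangle$, which can be written uniformly as $S|b\rangle = i^{b}|b\rangle$ for $b\in\{0,1\}$ (the two cases $b=0$ and $b=1$ give $i^{0}=1$ and $i^{1}=i$, matching the two columns of $S$). Substituting this identity into each tensor factor yields $S^{\otimes^n}|x\rangle = \bigl(\prod_{j=0}^{n-1} i^{x_j}\bigr)\, |x_{n-1}x_{n-2}\cdots x_0\rangle$.

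Finally I would simplify the scalar coefficient: $\prod_{j=0}^{n-1} i^{x_j} = i^{\sum_{j=0}^{n-1} x_j} = i^{w(x)}$ by the definition of $w(x)$ from the preliminaries, which is exactly the claimed equality $S^{\otimes^n}|x\rangle = i^{w(x)}|x_{n-1}x_{n-2}\cdots x_0\rangle$.

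There is no serious obstacle here; the only points requiring care are the bookkeeping — justifying the single-qubit identity $S|b\rangle=i^{b}|b\rangle$ on both values of $b$, and keeping the tensor factors aligned with the correct bit positions so that the output basis ket is genuinely $|x_{n-1}\cdots x_0\rangle$ — and, if one wants a more formal argument, one could instead run an easy induction on $n$, peeling off the least significant qubit at each step and invoking $i^{w(x_{\text{-}1})}\cdot i^{x_{n-1}} = i^{w(x)}$; the direct tensor-product computation, however, is the cleanest route.
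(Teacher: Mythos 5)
Your proof is correct and rests on the same idea as the paper's: the single-qubit identity $S|b\rangle = i^{b}|b\rangle$ applied factor-by-factor across the tensor product, with the phases multiplying to $i^{w(x)}$. The only difference is presentational — the paper packages the product as an induction on $n$ (peeling off the least significant qubit, exactly as you note in your closing remark), while you compute it directly.
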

\begin{proof}
By induction on $n$ being the base case with $n=1$ straightforward.
So suppose that the statement holds for $n-1$. Then
\begin{align*}
S^{\otimes^n} |x \rangle &= S|x_{n-1}\rangle \otimes S|x_{n-2}\rangle \dots \otimes S|x_0\rangle = S^{\otimes^{n-1}} | x_{n-1}\dots x_1 \rangle \otimes S|x_0\rangle =\\
&=  i^{\sum_{j=1}^{n-1} x_j}  | x_{n-1} \dots x_1 \rangle \otimes  S|x_0\rangle =  \text{ (by induction hypothesys) }\\
&= i^{\sum_{j=1}^{n-1} x_j}  | x_{n-1} \dots x_1 \rangle  \otimes  i^{x_0} |x_0\rangle = \\
&=i^{w(x)}  |x_{n-1}x_{n-2} \dots x_0  \rangle
\end{align*}
\end{proof}
\noindent
Now by Lemma \ref{lemma:application_S_port} and equation \eqref{eq:hadamard}, we have that 
\begin{align*}
|\psi_1 \rangle= S^{\otimes^n}H^{\otimes^n}|0\rangle ^{\otimes^n}=S^{\otimes^n}\dfrac{1}{\sqrt{2^n}} \sum_{x=0}^{2^n-1}  |x\rangle =\dfrac{1}{\sqrt{2^n}} \sum_{x=0}^{2^n-1} i^{w(x)} |x\rangle
\end{align*}
and applying the Hadamard to $|\psi_1 \rangle$, by \eqref{eq:hadamard}, we have that 
\begin{align*}
|\psi_2 \rangle= H^{\otimes^n}|\psi_1\rangle =\dfrac{1}{\sqrt{2^n}} \sum_{x=0}^{2^n-1} i^{w(x)}\left[ \dfrac{1}{\sqrt{2^n}}\sum_{z=0}^{2^n-1}(-1)^{x \cdot z} |z\rangle\right ]
\end{align*}
and reordering the term of the sum we have that 
\begin{align*}
|\psi_2 \rangle= \dfrac{1}{2^n}\sum_{z=0}^{2^n-1} \sum_{x=0}^{2^n-1}(-1)^{x \cdot z} i^{w(x)} |z\rangle=\dfrac{1}{2^n}\sum_{z=0}^{2^n-1} \left( \sum_{x=0}^{2^n-1}i^{w(x)} (-1)^{x \cdot z} \right) |z\rangle
\end{align*}
\noindent
So in order to compute the amplitudes of $|\psi_2 \rangle$ we need to compute the sum 
\[
\sum_{x=0}^{2^n-1} i^{w(x)} (-1)^{x \cdot z} 
\]
for every $0 \leq z<2^n$. We will do this in  the following two theorems.
First of all we need the following Lemma.

\begin{lemma} \label{lemma:preliminary_lemma}
Let $0 \leq z< 2^{2m+1}$ and $0 \leq x< 2^{2m+1}$ and  let $z_{2m}z_{2m-1}\dots z_0$ and  $x_{2m}x_{2m-1}\dots z_0$ be the binary representation, respectively, of $z$ and $x$. We have that
\begin{align} \label{eq:prliminary}
\sum_{x=2^{2m}}^{2^{2m+1}-1} &i^{w(x)}(-1)^{\sum_{j=0}^{2m} z_j \cdot x_j}= i(-1)^{z_{2m}} \sum_{x=0}^{2^{2m}-1} i^{w(x)}(-1)^{\sum_{j=0}^{2m-1} z_j \cdot x_j} 
\end{align} 
\end{lemma}
\begin{proof}
We note that, on the left hand of the equation \eqref{eq:prliminary}, for every element of the sum, we have that $x_{2m}=1$. Therefore  $\sum_{j=0}^{2m} z_j \cdot x_j= \sum_{j=0}^{2m-1} z_j \cdot x_j +z_{2m}$. Based on this we have that 
\begin{align*}
\sum_{x=2^{2m}}^{2^{2m+1}-1} &i^{w(x)}(-1)^{\sum_{j=0}^{2m} z_j \cdot x_j}= (-1)^{z_{2m}} \sum_{x=2^{2m}}^{2^{2m+1}-1} i^{w(x)}(-1)^{\sum_{j=0}^{2m-1} z_j \cdot x_j} 
\end{align*} 
Furthermore for the same reason above, if $2^{2m}\leq x <2^{2m+1}$ and  if $0\leq \bar{x} <2^{2m}$ then we have that $w(x)=w(\bar{x})+1$ and this prove the equation \eqref{eq:prliminary}.
\end{proof}

\begin{theorem} \label{lemma:first_lemma}
Let $0 \leq z < 2^n$, $z \in \mathbb N$. If $n=2m$ is even we have that
\begin{equation} \label{eq:first_lemma}
\sum_{x=0}^{2^n-1} i^{w(x)}(-1)^{z \cdot x} =  (-1)^{w(z)}i^{m+w(z)}2^m 
\end{equation}
\end{theorem}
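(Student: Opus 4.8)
The plan is to exploit the fact that the summand factorizes completely over the $n$ bit positions. Writing $x$ in binary as $x_{n-1}\dots x_0$, we have $i^{w(x)}=\prod_{j=0}^{n-1}i^{x_j}$ and $(-1)^{z\cdot x}=\prod_{j=0}^{n-1}(-1)^{z_jx_j}$, so the sum over all $0\le x<2^n$ splits into a product of $n$ independent one-bit sums:
\[
\sum_{x=0}^{2^n-1} i^{w(x)}(-1)^{z\cdot x}=\prod_{j=0}^{n-1}\Bigl(\,\sum_{x_j\in\{0,1\}} i^{x_j}(-1)^{z_jx_j}\Bigr)=\prod_{j=0}^{n-1}\bigl(1+i(-1)^{z_j}\bigr).
\]
First I would evaluate each factor: it equals $1+i$ when $z_j=0$ and $1-i$ when $z_j=1$. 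Since exactly $w(z)$ of the bits $z_j$ equal $1$ and the remaining $n-w(z)$ equal $0$, the product collapses to $(1+i)^{\,n-w(z)}(1-i)^{\,w(z)}$.

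Next I would simplify this using the elementary identities $(1+i)(1-i)=2$ and $(1+i)^2=2i$. Since $1-i=2/(1+i)$, we obtain $(1+i)^{\,n-w(z)}(1-i)^{\,w(z)}=2^{w(z)}(1+i)^{\,n-2w(z)}$; and because $n=2m$ the exponent $n-2w(z)$ is even, so $(1+i)^{\,n-2w(z)}=\bigl((1+i)^2\bigr)^{\,m-w(z)}=(2i)^{\,m-w(z)}=2^{\,m-w(z)}i^{\,m-w(z)}$. Multiplying through, the whole sum equals $2^m\,i^{\,m-w(z)}$. Finally, to match the form in the statement I would note that $i^{-2w(z)}=(-1)^{w(z)}$, hence $i^{\,m-w(z)}=i^{\,m+w(z)}\,i^{-2w(z)}=(-1)^{w(z)}i^{\,m+w(z)}$, and therefore the sum equals $(-1)^{w(z)}i^{\,m+w(z)}2^m$, which is \eqref{eq:first_lemma}.

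I expect the main obstacle to be purely cosmetic: keeping the bookkeeping of the powers of $i$ straight, and being comfortable with the possibly negative integer exponent $m-w(z)$. The latter causes no difficulty since $1+i$ is a nonzero complex number, so all the identities above hold for negative exponents as well; alternatively one can avoid negative exponents by treating the cases $w(z)\le m$ and $w(z)>m$ separately, or by using $(1-i)^2=-2i$ and splitting on the parity of $w(z)$. A second, more laborious route staying closer to the preparatory Lemma~\ref{lemma:preliminary_lemma} would be induction on $m$: partition the $2m$-bit sum according to the two most significant bits of $x$, apply an analogue of \eqref{eq:prliminary} to reduce each of the four blocks to a $(2m-2)$-bit sum of the same shape, and then invoke the induction hypothesis; but the direct factorization above is shorter and I would prefer it.
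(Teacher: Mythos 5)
Your proof is correct, and it takes a genuinely different --- and considerably shorter --- route than the paper. The paper argues by induction on $m$: it splits the sum over $0\le x<2^{2m+2}$ into blocks according to the two most significant bits of $x$, reduces each block to a $2m$-bit sum of the same shape via Lemma \ref{lemma:preliminary_lemma}, and assembles the four contributions; this is exactly the second route you sketch and then set aside. Your observation that the summand factorizes over bit positions,
\[
\sum_{x=0}^{2^n-1} i^{w(x)}(-1)^{z\cdot x}=\prod_{j=0}^{n-1}\bigl(1+i(-1)^{z_j}\bigr)=(1+i)^{\,n-w(z)}(1-i)^{\,w(z)},
\]
bypasses both the induction and the preliminary lemma, and the subsequent algebra with $(1+i)(1-i)=2$ and $(1+i)^2=2i$ is routine and correct, including your handling of the possibly negative exponent $m-w(z)$. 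A further payoff you do not explicitly claim: the intermediate identity $2^{w(z)}(1+i)^{\,n-2w(z)}$ is valid for every $n$, and for $n=2m+1$ odd it gives $(1+i)^{\,2(m-w(z))+1}=(1+i)(2i)^{\,m-w(z)}$, i.e.\ $(-1)^{w(z)}i^{\,m+w(z)}2^m(1+i)$, which is precisely Theorem \ref{lemma:last_lemma}; so your single computation subsumes both theorems and makes the even/odd case split look artificial. The only thing the paper's inductive route buys in exchange is that it exercises Lemma \ref{lemma:preliminary_lemma}, which the paper wants available anyway; as a proof of this statement, your factorization is strictly more economical and I would adopt it.
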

\begin{proof}
We prove the equation \eqref{eq:first_lemma} on induction on $m$ being the base case with $m=1$ easily verifiable for all $z \in \{0,1,2,3\}$.
So suppose the statement holds for all $h \leq m$ and for all $0 \leq z < 2^{2m}$. Then, for any $0\leq z <2^{2m+2}$ we have  
\begin{align}
\sum_{x=0}^{2^{2m+2}-1} i^{w(x)}(-1)^{z \cdot x} =&\sum_{x=0}^{2^{2m}-1} i^{w(x)}(-1)^{\sum_{j=0}^{2m-1} z_j \cdot x_j} \label{first_term}\\
+&\sum_{x=2^{2m}}^{2^{2m+1}-1} i^{w(x)}(-1)^{\sum_{j=0}^{2m} z_j \cdot x_j}+\label{second_term}\\
+&\sum_{x=2^{2m+1}}^{2^{2m+2}-1} i^{w(x)}(-1)^{\sum_{j=0}^{2m+1} z_j \cdot x_j}\label{third_term}
\end{align}
Now, by equation \eqref{eq:prliminary} and  by induction hypothesys, we have that \eqref{second_term} is equal to 
\begin{align}
\sum_{x=2^{2m}}^{2^{2m+1}-1} &i^{w(x)}(-1)^{\sum_{j=0}^{2m} z_j \cdot x_j}=
 i(-1)^{z_{2m}} \sum_{x=0}^{2^{2m}-1} i^{w(x)}(-1)^{\sum_{j=0}^{2m-1} z_j \cdot x_j}= \nonumber\\
&=i(-1)^{z_{2m}} (-1)^{w(z_{\text{-2}})}i^{m+w(z_{\text{-2}})}2^m  \label{final_second_term}
\end{align}
Likewise, in the term \eqref{third_term}, for each $x$ is the sum, the bit $x_{2m+1}$ is always set to 1, so we have that \eqref{third_term} is, by  equation \eqref{eq:prliminary}, equal to 
\begin{align}
\sum_{x=2^{2m+1}}^{2^{2m+2}-1}& i^{w(x)}(-1)^{\sum_{j=0}^{2m+1} z_j \cdot x_j}=i(-1)^{z_{2m+1}} \sum_{x=0}^{2^{2m+1}-1} i^{w(x)}(-1)^{\sum_{j=0}^{2m} z_j \cdot x_j} \label{final_third_term}
\end{align}
Now by repeatedly applying equation \eqref{eq:prliminary} and the induction hypothesys we have that the sum in right hand of equation \eqref{final_third_term} is
\begin{align}
\sum_{x=0}^{2^{2m+1}-1} & i^{w(x)}(-1)^{\sum_{j=0}^{2m} z_j \cdot x_j}=\\
=& \sum_{x=0}^{2^{2m}-1} i^{w(x)}(-1)^{\sum_{j=0}^{2m-1} z_j \cdot x_j}+\sum_{x=2^{2m}}^{2^{2m+1}-1} i^{w(x)}(-1)^{\sum_{j=0}^{2m} z_j \cdot x_j}=\nonumber\\
&= (-1)^{ w(z_{\text{-2}}) }i^{m+w( z_{\text{-2}} ) }2^m+i(-1)^{z_{2m}}\sum_{x=0}^{2^{2m}-1} i^{w(x)}(-1)^{\sum_{j=0}^{2m-1} z_j \cdot x_j}=\nonumber\\
&=(-1)^{w(z_{\text{-2}})}i^{m+w(z_{\text{-2}})}2^m \left[ 1 +i(-1)^{z_{2m}}\right] \label{eq:insidesum}
\end{align}

So if we replace \eqref{eq:insidesum} in \eqref{final_third_term} and if we  sum togheter \eqref{first_term}, \eqref{final_second_term} and \eqref{final_third_term} we obtain 

\begin{align}
a_z =&(-1)^{w(z_{\text{-2}})}i^{m+w(z_{\text{-2}})}2^m \left[ 1+i(-1)^{z_{2m}}+i(-1)^{z_{2m+1}}+i^2 (-1)^{z_{2m}+z_{2m+1}}\right]=\nonumber\\
=&(-1)^{w(z_{\text{-2}} )}i^{m+1+w(z_{\text{-2}})}2^m \left[-i+(-1)^{z_{2m}}+(-1)^{z_{2m+1}}+i (-1)^{z_{2m}+z_{2m+1}}\right] \label{final_formula}
\end{align}
Now if we denote by $P=(-1)^{w(z_{\text{-2}})}i^{m+1+w(z_{\text{-2}})}2^m$ we have that \eqref{final_formula} is
\begin{align*}
a_z= \left \{ \begin{array}{rl}
2 P & \text{ if $z_{2m}=z_{2m+1}=0$ }\\
-2i P & \text{ if $z_{2m} \neq z_{2m+1}$ }\\
-2 P & \text{ if $z_{2m}=z_{2m+1}=1$ }\\
\end{array} \right .
\end{align*}
and it is now easy to verify that 
\[
a_z= (-1)^{w(z)} i^{m+1+w(z)}2^{m+1}
\]
for every $0\leq z <2^{2m+2}$, and this proves the induction step.
\end{proof}

\begin{theorem} \label{lemma:last_lemma}
Let $n=2m+1$ an odd natural, $m \in \mathbb N$ and let $0 \leq z < 2^n$, $z \in \mathbb N$.
Then
\begin{equation} \label{eq:odd_case}
\sum_{x=0}^{2^n-1} i^{w(x)}(-1)^{z \cdot x} =  (-1)^{w(z)}i^{m+w(z)}2^m(1+i)
\end{equation}
\end{theorem}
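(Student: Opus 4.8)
The plan is to reduce the odd case $n=2m+1$ to the even case $n=2m$ already settled in Theorem \ref{lemma:first_lemma}, peeling off the most significant bit by means of Lemma \ref{lemma:preliminary_lemma}; no fresh induction is required. Writing $z\cdot x=\sum_{j=0}^{2m}z_jx_j$, I would split the sum according to the value of the top bit $x_{2m}$:
\begin{align*}
\sum_{x=0}^{2^{2m+1}-1} i^{w(x)}(-1)^{z\cdot x}=\sum_{x=0}^{2^{2m}-1} i^{w(x)}(-1)^{\sum_{j=0}^{2m-1}z_jx_j}+\sum_{x=2^{2m}}^{2^{2m+1}-1} i^{w(x)}(-1)^{\sum_{j=0}^{2m}z_jx_j}.
\end{align*}
In the first sum every $x$ has $x_{2m}=0$, so its exponent is $z_{\text{-1}}\cdot x$ where $z_{\text{-1}}$ is the $2m$-bit truncation of $z$; the second sum is precisely the left-hand side of \eqref{eq:prliminary}.

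Next I would apply Lemma \ref{lemma:preliminary_lemma} to rewrite the second sum as $i(-1)^{z_{2m}}$ times the first, so that the whole expression becomes $\bigl(1+i(-1)^{z_{2m}}\bigr)\sum_{x=0}^{2^{2m}-1} i^{w(x)}(-1)^{z_{\text{-1}}\cdot x}$, and then invoke Theorem \ref{lemma:first_lemma} with even length $2m$ to evaluate the remaining sum as $(-1)^{w(z_{\text{-1}})}i^{m+w(z_{\text{-1}})}2^m$. This leaves
\begin{align*}
\sum_{x=0}^{2^{2m+1}-1} i^{w(x)}(-1)^{z\cdot x}=\bigl(1+i(-1)^{z_{2m}}\bigr)(-1)^{w(z_{\text{-1}})}i^{m+w(z_{\text{-1}})}2^m.
\end{align*}

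To conclude I would use the elementary identity $1+i(-1)^{z_{2m}}=(1+i)(-i)^{z_{2m}}$ (immediate by checking $z_{2m}\in\{0,1\}$), observe that $(-i)^{z_{2m}}=(-1)^{z_{2m}}i^{z_{2m}}$ for a single bit $z_{2m}$, and use $w(z)=w(z_{\text{-1}})+z_{2m}$ to collapse the signs and the powers of $i$ into $(-1)^{w(z)}i^{m+w(z)}$, which produces the claimed value $(-1)^{w(z)}i^{m+w(z)}2^m(1+i)$. The only step demanding a little care is this final piece of algebra, in particular the branch $z_{2m}=1$ where the extra factor $1-i$ shifts both the sign and the exponent of $i$; but once the prefactor identity is recorded there is nothing delicate left, the content being carried entirely by Lemma \ref{lemma:preliminary_lemma} and Theorem \ref{lemma:first_lemma}. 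The degenerate value $m=0$ (i.e. $n=1$), for which the truncation notation $z_{\text{-1}}$ is vacuous, can alternatively be verified directly in one line.
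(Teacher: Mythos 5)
Your proposal is correct and follows essentially the same route as the paper: split off the top bit $x_{2m}$, use Lemma \ref{lemma:preliminary_lemma} to express the upper half as $i(-1)^{z_{2m}}$ times the lower half, evaluate the lower half by Theorem \ref{lemma:first_lemma}, and absorb the prefactor $1+i(-1)^{z_{2m}}$ into the claimed form. The only difference is cosmetic: the paper finishes with a two-case check on $z_{2m}\in\{0,1\}$, whereas you package the same computation into the identity $1+i(-1)^{z_{2m}}=(1+i)(-i)^{z_{2m}}$.
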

\begin{proof}
First of all we note that equation \eqref{eq:odd_case} holds if $m=0$ and $z \in \{0,1\}$. So in the following we suppose that $m>1$. We have that
\begin{align*}
a_z &=\sum_{x=0}^{2^{2m+1}-1} i^{w(x)}(-1)^{z \cdot x} =\nonumber\\
=&\sum_{x=0}^{2^{2m}-1} i^{w(x)}(-1)^{\sum_{j=0}^{2m-1} z_j \cdot x_j} 
+\sum_{x=2^{2m}}^{2^{2m+1}-1} i^{w(x)}(-1)^{\sum_{j=0}^{2m} z_j \cdot x_j}
\end{align*}
and, by Theorem \ref{lemma:first_lemma}, and by equation \eqref{eq:prliminary}, we have
\begin{align} \label{eq:final_intermediary}
a_z &=(-1)^{w(z_{\text{-1}})}i^{m+w(z_{\text{-1}})}2^m+i(-1)^{z_{2m}} \sum_{x=0}^{2^{2m}-1} i^{w(x)}(-1)^{\sum_{j=0}^{2m-1} z_j \cdot x_j}=\nonumber\\
=&(-1)^{w(z_{\text{-1}})}i^{m+w(z_{\text{-1}})}2^m+i(-1)^{z_{2m}}(-1)^{w(z_{\text{-1}})}i^{m+w(z_{\text{-1}})}2^m=\nonumber\\
=&  (-1)^{w(z_{\text{-1}})}i^{m+w(z_{\text{-1}})}2^m \left [ 1+i(-1)^{z_{2m}} \right]
\end{align}
Let $z_{2m}z_{2m-1} \dots z_0$ be the binary representation of $z$. Suppose first  that $z_{2m}=0$. Then equation \eqref{eq:final_intermediary} become
\begin{equation}
(-1)^{w(z)}i^{m+w(z)}2^m+ (-1)^{w(z)}i^{m+w(z)+1}2^m
\end{equation}
and the Theorem is therefore proved. So suppose that $z_{2m}=1$. Then equation \eqref{eq:final_intermediary} become
\begin{equation}
 (-1)^{w(z)-1}i^{m+w(z)-1}2^m+ (-1)^{w(z)}i^{m+w(z)}2^m
\end{equation}
but observing that 
\begin{equation}
(-1)^{w(z)}i^{m+w(z)+1}=(-1)^{w(z)-1}i^{m+w(z)-1}
\end{equation}
we have that also in this case the Theorem is satisfied.
\end{proof}

As an example we have computed the amplitudes $a_z$ (disregarding the normalization factor) for $n \in {3,4}$ and we report them on Table \ref{tbl:amplitude_example}.

\begin{table}
\begin{center}
\begin{tabular}{|c|c|}
\hline
$\mathbf{|z\rangle }$ & $\mathbf{a_z}$ \\
\hline
$|000\rangle$ & $-2+2i$\\ \hline
$|001\rangle$ & $2+2i$\\ \hline
$|010\rangle$ & $2+2i$\\ \hline
$|011\rangle$ & $2-2i$\\ \hline
$|100\rangle$ & $2+2i$\\ \hline
$|101\rangle$ & $2-2i$\\ \hline
$|110\rangle$ & $2-2i$\\ \hline
$|111\rangle$ & $-2-2i$\\ \hline
\end{tabular}
\begin{tabular}{|c|c|}
\hline
$\mathbf{|z\rangle }$ & $\mathbf{a_z}$ \\
\hline
$|0000\rangle$ & $-4$\\ \hline
$|0001\rangle$ & $4i$\\ \hline
$|0010\rangle$ & $4i$\\ \hline
$|0011\rangle$ & $4$\\ \hline
$|0100\rangle$ & $4i$\\ \hline
$|0101\rangle$ & $4$\\ \hline
$|0110\rangle$ & $4$\\ \hline
$|0111\rangle$ & $-4i$\\ \hline
$|1000\rangle$ & $4i$\\ \hline
$|1001\rangle$ & $4$\\ \hline
$|1010\rangle$ & $4$\\ \hline
$|1011\rangle$ & $-4i$\\ \hline
$|1100\rangle$ & $4$\\ \hline
$|1101\rangle$ & $-4i$\\ \hline
$|1110\rangle$ & $-4i$\\ \hline
$|1111\rangle$ & $-4$\\ \hline
\end{tabular}
\end{center}
\caption{Left the amplitudes of $a_z$ for $n=3$. Right the amplitudes of $a_z$ for $n=4$. In order to get the final amplitudes one should multiply them by a suitable normalization factor. \label{tbl:amplitude_example}}
\end{table}

\section{Doubling the amplitude of the search problem's solution states}

In this section we consider a quantum circuit for doubling the amplitude of solution's  states of the search problem.
For a search problem we refer, in general, to the problem of finding a solution of some NP-complete problem.
Like in the Groover algorithm we will use the intrinsic quantum mechanical parallelism and an oracle $f(|x \rangle)\in \{0,1\}$, specifically designed for the specific problem at hand, which return $1$ is $x$ is a solution of the problem and $0$ otherwise. 

In particular, in order to present in the detail the results of this paper, we will use a quantum oracle for the Partition Problem (PP). 
In the PP  we have a finite set of elements $E$ and a function $s: E \rightarrow \mathbb N^+$. We want to find a subset $E' \subset E$ such that $\sum_{e \in E'} s(e) = \sum_{e \in E \setminus E'} s(e)$. From now on we do not loss generality if we consider the set $E$ equal to the set of the first $|E|$ naturals, that is we always consider $E = \{0,1,\dots , n-1\}$. Furthermore we note that if PP has a solution $E'$ then $E - E'$ is also a solution of the PP:

The partition problem (PP) is well known to be an NP-complete problem \cite{Kar72}.\\


We describe, in the following, an application of the gates described in the previous section in a quantum circuit to deal with PP (see Figure \ref{fig:circuit}). 
While the following results apply specifically to the PP they can be applied to any other search problem. 

Denote by   $\mathcal S = \sum_{e \in E} s(e)/2$. Recall that PP problem has a solution only if $\mathcal S$ is an integer.
We use the two's complements representation of $-\mathcal S $ requiring $m=\lceil \log_2{\mathcal S}\rceil +1$ qubits. Then for each $e \in E$, we use $k_e=\lceil \log_2{s(e)}\rceil+1$  qubits to encode $s(e)$. These qubits will remain constant in every phase of the circuit and therefore we will not consider them in the reasoning that follows.  
We use $n$ qubits to encode a subset $E'$ of $E$. If $|x_{n-1}x_{n-2}\dots x_0\rangle$ is the state of those $n$ qubits, then an element $e$, $0 \leq e < n$, is  included in the set $E'$ if and only if $x_e=1$. 
 We will use $m$ qubits, denoted in the following by $|\sigma\rangle$, to store the sum $\sigma = - \mathcal S +\sum_{e \in E'}s(e)$ for the elements selected in $|x\rangle$. In this way $|\sigma\rangle= |0\rangle^{\otimes^m}$ for a solution $|x\rangle$ of the PP. 
We also use a control qubit $|c\rangle$.

So we have four groups of bits: $|x\rangle$, $|\sigma\rangle$, $|c\rangle$ and the sets of qubits used to represent the constants $s(e)$ for each element of $E$. Note that the number of qubits of the circuit, $n+m+1+\sum_{e \in E} k_e$, is polynomial in the size of a coincise specification of the PP. 

At the beginning of the circuit we have the following superposition:
\begin{align*}
|\varphi_0 \rangle = |0 \rangle^{\otimes^n} |\sigma\rangle|c\rangle
\end{align*}
where $\sigma$ is set to the two's complement of $-\mathcal S $ and $|c\rangle$ is set to $|1\rangle$. Then, we apply the Hadamard gate to the first $n$ qubits obtaining
\begin{align*}
|\varphi_1 \rangle &= \left (H^{\otimes^n}\otimes I^{m+1} \right)|\varphi_0 \rangle = \dfrac{1}{\sqrt{2^n}} \sum_{x=0}^{2^n-1} |x\rangle |\sigma\rangle|c\rangle
\end{align*}
Next, we uses each qubit $x_e$ to conditionally sum the element $s(e)$ to $|\sigma \rangle$. If there exist a solution to the PP then, in the final superposition of $|\sigma \rangle$, the amplitude of the state $|x\rangle|0 \rangle^{\otimes^m} |c\rangle$ will be not $0$. The states $|x\rangle$ for which $|\sigma \rangle$ is zero wil be referred as the \emph{solutions states} of the PP. The control qubit $|c\rangle$ will be set to zero exactly for those states for which $|\sigma \rangle=|0\rangle^{\otimes^m}$. At this point we apply an uncomputational step in order to set  $|\sigma\rangle=|-\mathcal S\rangle $. Now if we apply the $S$ gate to the first $n$ qubits we obtain, by Lemma \ref{lemma:application_S_port}
\begin{align*}
|\varphi_2 \rangle &=\left ( S^{\otimes^n}\otimes I^{m+1} \right)|\varphi_1 \rangle =  \dfrac{1}{\sqrt{2^n}} \sum_{x=0}^{2^n-1} i^{w(x)}|x\rangle |\sigma\rangle|c\rangle
\end{align*}
After this we apply again the Hadamard gate to the first $n$ qubits. This operation is controlled by the control qubit in a way that the Hadamard port is applied only to non solution states. For the sake of simplicity we suppose in the following, that the PP has only two solution whose numeric representation are $y$ and its bitwise complement $\overline y$.  By equation \eqref{eq:prliminary}, we obtain
\begin{align}
|\varphi_2 \rangle &\xrightarrow{\text{contr } H^{\otimes^n} }|\varphi_3 \rangle =  \dfrac{1}{\sqrt{2^n}} \sum_{z \in \{y, \overline y\}} i^{w(z)} |z\rangle |\sigma\rangle|0\rangle+  \nonumber \\
+& \dfrac{1}{2^n}\sum_{z=0}^{2^n-1}  \sum_{x \notin \{y, \overline y\}} i^{w(x)}(-1)^{x\cdot z}|z\rangle  |\sigma\rangle|1\rangle=\nonumber \\
=&\dfrac{1}{2^n}\left[ \sqrt{2^n}\sum_{z \in \{y, \overline y\}} i^{w(z)} |z\rangle|\sigma\rangle|0\rangle + 
\sum_{z=0}^{2^n-1}  \sum_{x \notin \{y, \overline y\}} i^{w(x)}(-1)^{x\cdot z}|z\rangle|\sigma\rangle|1\rangle \right]  \label{eq:final_amplitude_solution}
\end{align}
Now we want to quantify the amplitude of the state $|y\rangle|\sigma\rangle|1\rangle$ and $|\overline y\rangle|\sigma\rangle|1\rangle$ of equation \eqref{eq:final_amplitude_solution}. We consider only the  state $|y\rangle|\sigma\rangle|1\rangle$ since the same arguments can be applied to state $|\overline y\rangle|\sigma\rangle|1\rangle$.  The amplitude $b_y$ fr the state $|y\rangle|\sigma\rangle|1\rangle$ (in the following we disregard the normalization factor $1/2^n$) is given by the following formula
\begin{align}
b_y =\sum_{x \notin \{y, \overline y\}} i^{w(x)}(-1)^{x\cdot y}  \label{eq:final_amplitude_solution_for_a_signle_solution}
\end{align}
We may write the above sum as
\begin{align}
 b_y= \sum_{x \notin \{y, \overline y\}} i^{w(x)}(-1)^{x\cdot y}=\sum_{x=0}^{2^n} i^{w(x)}(-1)^{x\cdot y}- 
\sum_{x\in{\{y, \overline{y}\}}} i^{w(x)}(-1)^{x\cdot y}
\end{align}
We have that 
\begin{align}
\sum_{x\in{\{y, \overline{y}\}}} i^{w(x)}(-1)^{x\cdot y}&=i^{w(y)}(-1)^{y\cdot y} + i^{n-w(y)}(-1)^{\overline{y} \cdot y}=\nonumber\\
=&i^{w(y)}(-1)^{{w(y)}} + i^{n-w(y)}=\label{eq:generic_z_term_first}
\end{align}
Then, recalling that $i^x=i^{-x}$ when $x$ is even and $i^{-x}=-i^x$ when $x$ is odd, we have two cases: $w(y)$ is even and then 
\begin{align}
i^{w(y)}(-1)^{{w(y)}} &+ i^{n-w(y)}=i^{w(y)}(-1)^{{w(y)}} + i^{n+w(y)}=\nonumber\\
&=i^{w(y)}(1+ i^n)
\end{align}
while if $w(y)$ is odd 
\begin{align}
i^{w(y)}(-1)^{{w(y)}} &+ i^{n-w(y)}=i^{w(y)}(-1)^{{w(y)}} - i^{n+w(y)}=\nonumber\\
&=-i^{w(y)}(1+ i^n)
\end{align}

For simplicity of notation in the following we denote $w(y)$ as simply $\overline{w}$. We have that if $n=2m$ is even then, by Theorem \ref{lemma:first_lemma}, $b_y$ is 
\begin{align} \label{eq:duplicate_n even}
b_y= \left\{   \begin{array}{ll}
(-1)^{\overline{w}} i^{m+{\overline{w}}} 2^m-  i^{\overline{w}}(1+i^{2m})& \text{ if $\overline{w}$ is even}\\
& \\
 (-1)^{\overline{w}} i^{m+{\overline{w}}} 2^m +  i^{\overline{w}}(1+i^{2m}) & \text{ if $\overline{w}$ is odd}\\
\end{array} \right .
\end{align}
while if $n=2m+1$ is odd, by Theorem \ref{lemma:last_lemma},  $b_y$ is
\begin{align}\label{eq:duplicate_n odd}
b_y= \left\{   \begin{array}{ll}
(-1)^{\overline{w}}i^{m+\overline{w}}2^m(1+i)- i^{\overline{w}}(1+i^{2m+1})& \text{ if $\overline{w}$ is even}\\
& \\
(-1)^{\overline{w}}i^{m+\overline{w}}2^m(1+i)+ i^{\overline{w}}(1+i^{2m+1}) & \text{ if $\overline{w}$ is odd}\\
\end{array} \right .
\end{align}
It is immediate to check that in the above equations \eqref{eq:duplicate_n even} and \eqref{eq:duplicate_n odd} the term  $i^{\overline{w}}(1+i^{n})$ become trascurable, with respect to the other term in the equation, as $m$ become bigger. We conclude that the amplitude of the state $|y\rangle|\sigma\rangle|1\rangle$ is almost the same of the amplitude of state $|y\rangle|\sigma\rangle|0\rangle$, thus effectively duplicating the chances of state $|y\rangle$ at the end of the circuit.
For example if $n=2m+1=3$ and $|y\rangle =|011\rangle$ we have that $b_y=3-3i$, so that the probability of getting $|y\rangle$ is, by \eqref{eq:final_amplitude_solution}, $\dfrac{1}{64}\left [|2\sqrt{2}|^2+|3-3i|^2 \right ]=\dfrac{26}{64}=.40625$ which is exactly the output of the quirk simulator.
\begin{figure}[htbp]
    \centering
    \includegraphics[width=7.6cm]{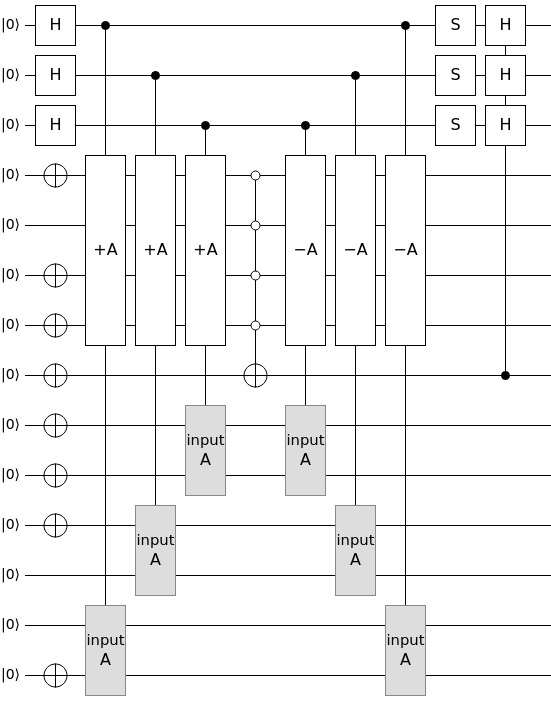}
  \caption{The circuit exploliting $S$ gates.}
  \label{fig:circuit}
\end{figure}

\section{Conclusion and future work}

We presented here a quantum algorithm for doubling the amplitude of the state correspondig to the solution of the partition problem. This is interesting because if we would be able to iterate such a doubling we could provide a polynomial quantum algorithm for solving an NP-complete problem. 
Possible future works should focus on: generalizing the mathematical results to istance of search problems where there are more than 2 solutions, find out if this algorithm can be combined to the Goover algorithm in order to seed up the latter of a factor of $p$ where $p\geq 2$ and, more important, to check if and how  it is possibile to iterate the doubling of the amplitude in order to get some polynomial time algorithm for solving the search problem.

\clearpage
\bibliographystyle{abbrv}
\bibliography{notes_on_HSHx}

\end{document}